\newtheorem{theorem}{Theorem} 
 \definecolor{BLACK}{gray}{0}
 \definecolor{WHITE}{gray}{1}
 \definecolor{RED}{rgb}{1,0,0}
 \definecolor{GREEN}{rgb}{0,1,0}
 \definecolor{BLUE}{rgb}{0,0,1}
 \definecolor{CYAN}{cmyk}{1,0,0,0}
 \definecolor{MAGENTA}{cmyk}{0,1,0,0}
 \definecolor{YELLOW}{cmyk}{0,0,1,0}
\let\baraccent=\= 
\renewcommand{\=}[1]{\stackrel{#1}{=}} 
\begin{document}
\title{Non-abelian Geometric Quantum Energy Pump}
\author{Yang Peng}\email{yang.peng@csun.edu}
\affiliation{Department of Physics and Astronomy, California State University, Northridge, Northridge, California 91330, USA}
\affiliation{Institute of Quantum Information and Matter and Department of Physics, California Institute of Technology, Pasadena, CA 91125, USA}

\begin{abstract}
We introduce a non-abelian geometric quantum energy pump realized by a transitionless geometric quantum drive--a time-dependent Hamiltonian supplemented by a counterdiabatic term generated by a prescribed trajectory on a smooth control manifold--that coherently transports states within a degenerate subspace. When the coordinates of the trajectory are independently addressable by external drives, the net energy transferred between drives is set by the non-abelian Berry-curvature tensor. The trajectory-averaged pumping power is separately controlled by the initial state and by the Hamiltonian topology through the Euler class. We outline an implementation with artificial atoms, which are realizable on various platforms including trapped atoms/ions, superconducting circuits, and semiconductor quantum dots. The resulting energy pump can serve as a quantum transducer or charger, and as a metrological tool for measuring phase coherences in quantum states.
\end{abstract}

\maketitle
\emph{Introduction.}---
Quantum energy pump, a quantum machine that controls the form and flow of energy among interconnected quantum systems, is one of the most important building block in quantum technology. It can function as a quantum transducer~\cite{Quantum_trans_rev}, which converts and transports energy between devices operating at different energy scales~\cite{bruzewicz2019trapped,kjaergaard2020superconducting,RevModPhys.95.025003} to create a quantum network~\cite{kimble2008quantum,caleffi2025quantum}.
In a different setting, the energy pump behaves as a charger~\cite{Harald2026} that is able to transfer energy into a quantum battery~\cite{Quantum_bat_rev}.

Two features are particularly desirable for an ideal quantum energy pump: \emph{controllability} and \emph{sustainability}. Controllability means that the pump can process energy at different frequencies and that the pumping rate can be tuned continuously. Sustainability means that the pump can operate for long times without requiring periodic resets. An important step in this direction was made in Ref.~\cite{Martin2017}, which proposed a qubit/spin driven by two external tones at different frequencies and demonstrated a chiral energy flow from one drive to the other. In the adiabatic regime---when the drive frequencies are much smaller than the instantaneous ground state energy gap---the pumping rate becomes quantized and is determined by a Chern number. While subsequent works~\cite{Peng2018_quasiperiodic,Nathan2019,Crowley2020,Qi2021} extended and generalized this framework, the reliance on adiabaticity and the associated quantization restrict the achievable tunability of the pumping power. More recently, Ref.~\cite{Psaroudaki2023} showed that energy conversion can be realized beyond the adiabatic regime at a non-quantized rate by preparing the system in a Floquet quasienergy eigenstate. However, in that approach the pumping rate does not admit a simple relation to microscopic parameters, which limits its practical use as a tunable device.

\begin{figure}[t]
    \centering
    \includegraphics[width=\linewidth]{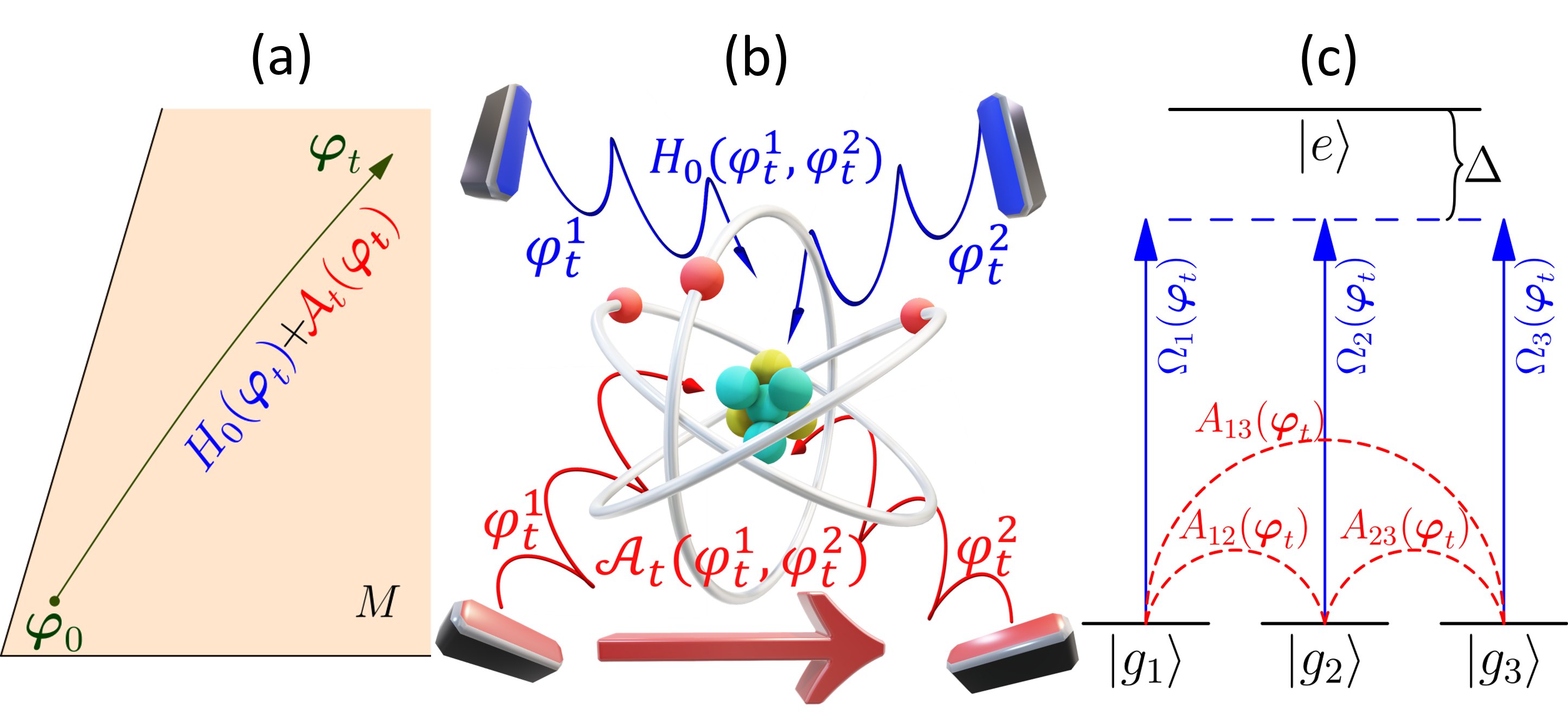}
    \caption{(a) Transitionless geometric quantum drive: a trajectory (green line) $\boldsymbol{\varphi}_t$ starting from $\boldsymbol{\varphi}_0$ on a smooth manifold $M$, and a time-dependent Hamiltonian $H_0(\boldsymbol{\varphi}_t) + \mathcal{A}_t(\boldsymbol{\varphi}_t)$ with $\mathcal{A}_t$ the Kato gauge potential of $H_0$.
    (b) An artificial atom is controlled by $H_0(\boldsymbol{\varphi}_t) + \mathcal{A}_t(\boldsymbol{\varphi}_t)$, where we assumed $\boldsymbol{\varphi}_t = (\varphi_t^1, \varphi_t^2)$, in which both components can be independently controlled by different protocols, for $H_0$ and $\mathcal{A}_t$, as indicated by blue and red sprial arrows. (c) Tripod system is a concrete experimental setup of the setting in (b). $H_0$ consists of three degenerate levels ${\ket{g_{1,2,3}}}$ which are coupled (blue arrows) to the excited state $\ket{e}$ at detuning $\Delta$ with,  $\boldsymbol{\varphi}_t$-dependent coupling strength $\Omega_{1,2,3}$. At large detuning, $\mathcal{A}_t$ consists of couplings between states in the degenerate subspace, as indicated by the red dashed lines.}
    \label{fig:fig1}
\end{figure}

A common feature of these two approaches is that the pump's state returns to itself after the pumping operation: it remains in the instantaneous ground state in the adiabatic protocol, or in a Floquet quasienergy eigenstate after each period in the Floquet protocol. This recurrence is closely connected to sustainability, as it allows repeated operation without explicit reset. By contrast, in the quantum-battery literature the sustainability of the charger is often not a primary focus: most works emphasize protocols that maximize the stored energy in minimal time~\cite{Mazzoncini2023,rodriguez2024optimal}, and the charging dynamics is typically non-steady, with the final state of the charger left unconstrained.

To achieve greater controllability, in this Letter we exploit the \emph{non-abelian} geometric structure~\cite{Wilczek} associated with a degenerate subspace, rather than restricting the dynamics to a single eigenstate as in the approaches discussed above.
To ensure sustainability, we implement a counterdiabatic (CD) protocol~\cite{demirplak2003adiabatic,demirplak2008consistency,berry2009} that constrains the evolution to remain within the degenerate manifold. Concretely, the pump is driven by a transitionless geometric protocol [Fig.~\ref{fig:fig1}(a)] that keeps the system inside the chosen degenerate subspace at arbitrary driving speed and for arbitrarily long durations. The non-abelian evolution within this subspace provides an additional control knob: the pumping rate can be tuned continuously by preparing different initial superpositions within the degenerate manifold. Finally, we show that the required ingredients can be implemented in an artificial atom with a ``tripod'' level structure, as illustrated in Figs.~\ref{fig:fig1}(b,c). Tripod-like systems have been realized experimentally using trapped atoms/ions~\cite{Han2008,Juzeli2008,Feng2020,zhao2022quantum}, superconducting circuits~\cite{faoro2003non,kamleitner2011geometric,abdumalikov2013experimental,Kumar2016,vepsalainen2019superadiabatic,Dogra2022,Setiawan2021}, and semiconductor quantum dots~\cite{Houel2014,Carter2021,zhou2024quantum}.

To allow readers to assess whether the proposed scheme is relevant in a particular scenario, we compare the different energy pumping schemes in Table \ref{tab:comparison}. 
\begin{table*}
    \centering
    \begin{ruledtabular}
    \begin{tabular}{ccccc}
    &This work  &Ref.~\cite{Martin2017,Peng2018_quasiperiodic,Nathan2019,Crowley2020} &Ref.~\cite{Psaroudaki2023}
    &Ref.~\cite{Mazzoncini2023,rodriguez2024optimal}\\
    \colrule
         Speed &finite speed  &adiabatic (small frequency)  &finite speed &finite speed  \\
         Controllability &Euler class and phase coherence  &Chern number (and system size in Ref.~\cite{Peng2018_quasiperiodic})  &no  &no  \\
         Sustainability  &yes &yes &yes &no \\
    \end{tabular}
    \end{ruledtabular}
    \caption{Comparison between different energy pumping schemes}
    \label{tab:comparison}
\end{table*}

\emph{Transitionless geometric quantum drive.}---
Consider a smooth Hamiltonian $H_{0}(\boldsymbol{\varphi})$ defined
on a $d$-dimensional manifold $M$, with local coordinate $\boldsymbol{\varphi}=(\varphi^1, \varphi^2, \dots, \varphi^d)$.
Let $\boldsymbol{\varphi}_{t}$ be a smooth trajectory on $M$ parametrized by $t$ (viewed as time), as illustrated in Fig.~\ref{fig:fig1} (a).
The triple $(H_0, M, \boldsymbol{\varphi})$ defines a geometric quantum drive~\cite{wu_geometric_2025}.

A transitionless geometric quantum drive can be achieved if we evolve an eigenstate of $H_0(\boldsymbol{\varphi}_0)$
under a new Hamiltonian $H(\boldsymbol{\varphi}_t) = H_0(\boldsymbol{\varphi}_t) + \mathcal{A}_t(\boldsymbol{\varphi}_t)$,
where $\mathcal{A}_t$ is the CD term~\cite{demirplak2003adiabatic, demirplak2008consistency,  berry2009} that ensures the time-evolved state remains to be an instantaneous eigenstate of $H_0(\boldsymbol{\varphi}_t)$.
The CD term $\mathcal{A}_t$ is not unique and one common choice is the Kato gauge potential (KGP)~\cite{kato1950adiabatic,budich2013adiabatic, Schindler2025, Ducan2025, SM} $\mathcal{A}_t = \sum_\mu\dot{\varphi}_t^\mu \mathcal{A}_\mu$,
with
\begin{equation}
    \mathcal{A}_\mu(\boldsymbol{\varphi}_t) = \frac{i}{2}\sum_n[\partial_\mu \Pi^n_{\boldsymbol{\varphi}_t},  \Pi^n_{\boldsymbol{\varphi}_t}],\label{eq:KGP}
\end{equation}
where the index $\mu$ is the directional index in parameter space ($\mu = 1,2,\dots,d$) with $\partial_\mu \equiv \partial/\partial\varphi^\mu$.
We denote the eigenspace projector of the Hamiltonian $H_0(\boldsymbol{\varphi})$ at energy $\epsilon_n(\boldsymbol{\varphi})$ as
$\Pi_{\boldsymbol{\varphi}}^{n} = \sum_\alpha \ket{n\alpha(\boldsymbol{\varphi})}\bra{n\alpha(\boldsymbol{\varphi})}$, where $\{\ket{n\alpha(\boldsymbol{\varphi})}\}$ are the corresponding (possibly) degenerate (indexed by $\alpha$) eigenstates. Notably, the KGP is a unique gauge choice that reproduces the phase evolution of eigenstates in an adiabatic evolution exactly.
The instantaneous eigenstates are parallel transported along the trajectory $\boldsymbol{\varphi}_t$ according to~\cite{SM}
\begin{equation}
\ket{n\alpha(\boldsymbol{\varphi}_t)} = \mathcal{W}(\boldsymbol{\varphi}_t \leftarrow \boldsymbol{\varphi}_0)\ket{n\alpha(\boldsymbol{\varphi}_0)},
\end{equation} where $\mathcal{W}(\boldsymbol{\varphi}_{t}\leftarrow\boldsymbol{\varphi}_{0})=\mathcal{T}\exp(-i\int_{0}^{t}dt'\,\mathcal{A}_{t}(\boldsymbol{\varphi}_{t'}))$ is the Wilson line operator ($\mathcal{T}$ is referred to as the time ordering operator).

\emph{Energy pumping}.---
Let us view $H_0(\boldsymbol{\varphi})$ as an intrinsic time-dependent system, such as a driven artificial atom, as depicted in Fig.~\ref{fig:fig1} (b).
Further, we add additional drives (viewed as battery or other devices connected to the pump) to create the KGP $\mathcal{A}_t(\boldsymbol{\varphi}_t)$, in which each component $\varphi_t^\mu$ can be controlled independently by the $\mu$th driving protocol. We initialize the system in an eigenstate of a degenerate subspace of  $H_0(\boldsymbol{\varphi}_0)$, namely choosing the initial state $\ket{\psi(0)} = \Pi_{\boldsymbol{\varphi}_{0}}^{n}\ket{\psi(0)}$. We consider the energy (integrated power) pumped~\cite{Martin2017,Peng2018_quasiperiodic, Nathan2019,Crowley2020,Qi2021,Psaroudaki2023} into the $\mu$th drive (of the KGP)
\begin{equation}
E_\mu(\boldsymbol{\varphi}_t) = \int_0^t ds\, \dot{\varphi}_s^\mu \bra{\psi(s)}\partial_\mu \mathcal{A}_t(\boldsymbol{\varphi}_s)\ket{\psi(s)} \label{eq:pumping}
\end{equation}
where $\ket{\psi(s)}$ is the time evolved state at time $s$ under $H(\boldsymbol{\varphi}_t)$.

Since the driving is transitionless, we have
\begin{equation}
\ket{\psi(t)}=e^{-i\int_{0}^{t}ds\,\epsilon_{n}(\boldsymbol{\varphi}_s)}\mathcal{W}(\boldsymbol{\varphi}_{t}\leftarrow\boldsymbol{\varphi}_{0})\ket{\psi(0)},
\end{equation}
which enables us to obtain
$E_{\mu}(\boldsymbol{\varphi}_{t})=\int_{0}^{t}ds\,\dot{\varphi}_{s}^{\mu} F_{t\mu}^{\psi}(\boldsymbol{\varphi}_s)$,
where $F^\psi_{t\mu}(\boldsymbol{\varphi}_t) = \bra{\psi(t)}\mathcal{F}^n_{t\mu}(\boldsymbol{\varphi}_t)\ket{\psi(t)}$ is the expectation value of the non-abelian Berry curvature tensor~\cite{Wilczek,SM} on the $n$th eigensubspace, which is defined as
\begin{equation}
\mathcal{F}^n_{\mu\nu}(\boldsymbol{\varphi}_t) = i[\partial_\mu \Pi^n_{\boldsymbol{\varphi}_t}, \partial_\nu \Pi^n_{\boldsymbol{\varphi}_t}].
\end{equation}
With an initial basis set $\{\ket{n\alpha(\boldsymbol{\varphi}_0)}\}$,
one can parametrize $\ket{\psi(0)} = \sum_\alpha c_\alpha \ket{n\alpha(\boldsymbol{\varphi}_0)}$, then the pumped energy can also be written as~\cite{SM}
\begin{equation}
E_\mu (\boldsymbol{\varphi}_t) = \sum_{\nu\neq\mu}\sum_{\alpha \beta}c^*_\alpha c_\beta\int_0^t ds\, \dot{\varphi}^\nu_s\dot{\varphi}^\mu_s F^{\alpha\beta}_{\nu\mu}(\boldsymbol{\varphi}_s),
\label{eq:pumping_curvature}
\end{equation}
where only index $\nu$ is summed, and $F^{\alpha\beta}_{\nu\mu} = \bra{n\alpha}\mathcal{F}^n_{\nu\mu}\ket{n\beta}$.

For simplicity, we shall choose $M$ to be an orientable closed manifold, say a $d$-torus $\mathbb{T}^d$, and parametrize
\begin{equation}
\varphi^\mu_t = \varphi^\mu_0+ f^\mu(t)
\label{eq:parametrization}
\end{equation}
with identification of $\varphi^\mu_t =\varphi^\mu_{t+2\pi}$.
Here, we assume that the function $f^\mu(t)$'s time derivative, $\dot{f}^\mu(t)$,  has a slow time dependence can be regarded as the driving frequencies that can be adjusted in time.

Note that the driving trajectory $\boldsymbol{\varphi}_t$ depends on the initial phases $\varphi_0^\mu$, which specify the starting point of the trajectory on the manifold. In many experiments these initial phases are not directly controlled---for instance, when multiple drives are free-running continuous waves and the experimental sequence begins at an arbitrary time.  In such situations, the most relevant quantity is therefore the \emph{phase-averaged} pumped energy,
$\overline{E}_\mu(t)$, defined as the energy transferred into channel $\mu$ up to time $t$ averaged over the unknown initial phases. Experimentally, this average can be obtained either by repeating the protocol many times on a single device, or by measuring an ensemble of devices that effectively sample different initial phases. The phase-averaged pumped energy can be written as
    $\overline{E}_\mu (t) = \sum_{\nu\neq \mu} \int_0^tds\,\overline{P}_{\nu\mu}(s)$,
where $\overline{P}_{\nu\mu}(t)$ is the instantaneous pumping power from drive $\nu$ to drive $\mu$, averaged over initial phases.
We find
\begin{equation}
   \overline{P}_{\nu\mu}(t) = \sum_{\alpha\beta}  c^*_\alpha c_\beta \overline{F}^{\alpha\beta}_{\nu\mu} \dot{f}^\nu(t)\dot{f}^\mu(t).
   \label{eq:pumpingpower}
\end{equation}
with the phase-averaged non-abelian Berry curvature $\overline{F}^{\alpha\beta}_{\mu\nu} = \int d\varphi^\mu d\varphi^\nu\,F^{\alpha\beta}_{\mu\nu}/(4\pi^2)$, where the integral is taken over the $(\varphi^\mu,\varphi^\nu)$ submanifold.
Eq.~\eqref{eq:pumpingpower} is one of the central results of this work; below we specialize it to a ``tripod'' implementation, where it reduces to a particularly transparent and experimentally useful expression in Eq.~\eqref{eq:central}.

Before introducing the experimental proposition, we comment on the reason why we choose the KGP $\mathcal{A}_t(\boldsymbol{\varphi}_t)$, not the entire $H(\boldsymbol{\varphi}_t)$ as driving term for the energy pump. This is because the term $H_0(\boldsymbol{\varphi}_t)$, which is the other contribution in $H(\boldsymbol{\varphi}_t)$, does not produce any pumping power when the phase average is conducted. To see this, we can write down the energy pumped due to $H_0$ as
\begin{equation}
E'_\mu(\boldsymbol{\varphi}_t) = \int_0^t ds\, \dot{\varphi}_s^\mu \bra{\psi(s)}\partial_\mu H_0(\boldsymbol{\varphi}_s)\ket{\psi(s)} \label{eq:H0pumping}.
\end{equation}
Since $\ket{\psi(s)}$ is an instantaneous eigenstate, we can apply the Hellmann–Feynman theorem, and write the integrand as $\dot{\varphi}_s^\mu  \partial_\mu\epsilon_n(\boldsymbol{\varphi}_t)$. This term averages to zero over random initial phases if one takes the parametrization in Eq.~(\ref{eq:parametrization}).

\emph{Experimental Proposition.}---A convenient physical implementation uses three long-lived states {$\ket{g_i}$} ($i=1,2,3$) coupled to a common excited state $\ket{e}$ (``tripod'' system) with a common one-photon detuning $\Delta$ and time-dependent Rabi rates $\Omega_i(t)\in \mathbb{R}$ for each state $\ket{g_i}$, which can be controlled independently. This is described by the following time-dependent Hamiltonian $H_0(\boldsymbol{\varphi}_t) = \Delta \ket{e}\bra{e} + V(\boldsymbol{\varphi}_t) + V^\dagger(\boldsymbol{\varphi}_t)$, where $V(\boldsymbol{\varphi}) =  \sum_{i=1}^3 \Omega_i(\boldsymbol{\varphi}) \ket{e}\bra{g_i}$. Such a setup can be realized experimentally in various physical platforms~\cite{Han2008,Juzeli2008,Feng2020,zhao2022quantum,faoro2003non,kamleitner2011geometric,abdumalikov2013experimental, Kumar2016, vepsalainen2019superadiabatic, Setiawan2021, Dogra2022,Houel2014,Carter2021,zhou2024quantum}.

This Hamiltonian has a doubly degenerate dark state subspace at zero energy, denoted as $\ker{H_0}$,  which is orthogonal to the excited state $\ket{e}$ and the superposition state $\ket{\tilde{g}} = \sum_j \Omega_j \ket{g_i}/\Omega$, where $\Omega = \sqrt{\sum_i \Omega_i^2}$. $H_0$ also admits two non-degenerate states $\ket{\psi_\pm}$  at finite energies $\epsilon_{\pm} = (\Delta \pm \sqrt{\Delta^2 + 4\Omega^2})/2$. These two finite energy eigenstates are obtained by superpositions of $\ket{e}$ and $\ket{\tilde{g}}$.

Let us initialize the system in a dark state $\ket{\psi(0)}\in \ker{H_0(\boldsymbol{\varphi}_0)}$, which can be done using the technique coherent population transfer~\cite{Bergmann1998, Vitanov2017}. To ensure the time-evolving state $\ket{\psi(t)}$ stays within the dark state subspace, we need to add the CD term $\mathcal{A}_t$, i.e. the KGP introduced previously.
Actually, we only need to construct the KGP projected onto the subspace spanned by the three long-lived states $\{\ket{g_{i}}\}$. To see this, assuming $\ket{\psi(t)}\in \ker{H_0(\boldsymbol{\varphi}_t)}$, then consider infinitesimal evolution
\begin{align}
\ket{\psi(t+d t)}&=\ket{\psi(t)}-i dt (H_0(\boldsymbol{\varphi}_t)+\mathcal{A}_t(\boldsymbol{\varphi}_t))\ket{\psi(t)} \nonumber \\
 &= \ket{\psi(t)} - idt\mathcal{A}_t(\boldsymbol{\varphi}_t)\ket{\psi(t)},
\end{align}
which is orthogonal to $\ket{e}$.
Thus, in order to maintain $\ket{\psi(t+dt)}\in\ker{H_0(\boldsymbol{\varphi}_{t+dt})}$, we just need to ensure $\braket{\psi(t+dt)|\tilde{g}(t+dt)} = 0$, which can be done by choosing $\mathcal{A}_t = i[\dot{\Pi}_{\boldsymbol{\varphi}_t}, \Pi_{\boldsymbol{\varphi}_t}]$, with the eigenstate projector $\Pi_{\boldsymbol{\varphi}_t} = \ket{\tilde{g}(\boldsymbol{\varphi}_t)}\bra{\tilde{g}(\boldsymbol{\varphi}_t)}$. By induction, $\ket{\psi(t)}$ stays to be a dark state.

Explicitly, the KGP $\mathcal{A}_t = i\sum_{jk}A_{jk}\ket{g_j}\bra{g_k}$, where~\cite{SM}
\begin{equation}
   A_{jk}(\boldsymbol{\varphi}_t) = \frac{1}{\Omega(\boldsymbol{\varphi}_t)^2}\left[\dot{\Omega}_{j}(\boldsymbol{\varphi}_t)\Omega_{k}(\boldsymbol{\varphi}_t)-\dot{\Omega}_{k}(\boldsymbol{\varphi}_t)\Omega_{j}(\boldsymbol{\varphi}_t)\right].
\end{equation}
These terms can be implemented with the two-photon Raman processes~\cite{Chen2010shortcut,Dalibard2011_RMP_GaugeFields}
which has been realized experimentally~\cite{Kumar2016, vepsalainen2019superadiabatic, Dogra2022}

Since the Hamiltonian $H_0$ is real symmetric, we can choose two 3D real vectors $\boldsymbol{u_1}\equiv\ket{u_1}$, and $\boldsymbol{u_2}\equiv\ket{u_2}$ as the two instantaneous basis states of $\ker{H_0}$, at each $\boldsymbol{\varphi}$, such that $\ket{\tilde{g}}\equiv \tilde{\boldsymbol{g}} = \boldsymbol{u_1}\times\boldsymbol{u_2}$, where $\times$ denotes the cross product of vectors in $\mathbb{R}^3$. The non-abelian Berry curvature in basis $\{\ket{u_1}, \ket{u_2}\}$ has zero diagonal components, due to reality of the states.
It can have nonzero off-diagonal elements $F^{12}_{\nu\mu} = -F^{12}_{\nu\mu}$, also known as the Euler form~\cite{Unal2020,bouhon2020non}, which is given by
\begin{subequations}
\begin{align}
\mathrm{Eu}_{\nu\mu} &\equiv -iF^{12}_{\nu\mu} = \tilde{\boldsymbol{g}}\cdot\left(\partial_\nu  \tilde{\boldsymbol{g}} \times \partial_\mu\tilde{\boldsymbol{g}}\right) \\
&= \braket{\partial_\nu u_1|\partial_\mu u_2} - i\braket{\partial_\nu u_2|\partial_\mu u_1}.
\end{align}
\end{subequations}

Writing $\ket{\psi(t)} = c_1\ket{u_1(\boldsymbol{\varphi}_t)} + c_2 \ket{u_2(\boldsymbol{\varphi}_t)}$, and plug it into Eq.~(\ref{eq:pumping_curvature}), we have
\begin{equation}
    E_\mu (\boldsymbol{\varphi}_t) = -2\mathrm{Im}(c^*_1 c_2)\sum_{\nu\neq\mu}\int_0^t ds\, \dot{\varphi}^\nu_s\dot{\varphi}^\mu_s \mathrm{Eu}_{\nu\mu}(\boldsymbol{\varphi}_s).
    \label{eq:single_trajectory}
\end{equation}
Let us take the parametrization in Eq.~(\ref{eq:parametrization}), and write $c_1 = c e^{i\phi_1}$, $c_2 = \sqrt{1-c^2}e^{i\phi_2}$.
If we perform average over initial drive phases, we obtain the instantaneous pumping power from drive $\nu$ to drive $\mu$ defined in Eq.~(\ref{eq:pumpingpower}) as
\begin{equation}
    \overline{P}_{\nu\mu}(t) = \dot{f}^\nu(t)\dot{f}^\mu(t)c\sqrt{1-c^2}\sin(\delta \phi)\frac{\chi_{\nu\mu}}{\pi}
    \label{eq:central}
\end{equation}
where $\delta \phi = \phi_1 - \phi_2$. Here, averaging $\mathrm{Eu}_{\nu\mu}$ gives the Euler class on the corresponding submanifold, defined as~\cite{Unal2020,bouhon2020non}
\begin{equation}
\chi_{\nu\mu} = \int d\varphi^\nu d\varphi^\mu\,\mathrm{Eu}_{\nu\mu}/(2\pi)
\end{equation}
which is always an even integer.
Note that Eqs.~(\ref{eq:single_trajectory},\ref{eq:central}) are independent of the particular choice of basis $\{|u_1\rangle,|u_2\rangle\}$ within the degenerate subspace (see Supplemental Material~\cite{SM}); it can be viewed as the specialization of the general expression~\eqref{eq:pumpingpower} to the tripod implementation.

Eq.~(\ref{eq:central}) makes the controllability of the pump explicit: the magnitude and direction of the instantaneous pumping power are tunable through the initial superposition within the degenerate manifold. In particular, $|\overline{P}_{\nu\mu}|$ is maximized at $c=1/\sqrt{2}$ and $\delta\phi=\pm \pi/2$, and changing the sign of $\delta\phi$ reverses the pumping direction. The dependence on $\sin(\delta\phi)$ highlights the role of phase coherence, allowing one to interpret the non-abelian energy pumping as a quantum-interference effect.

\emph{Two-tone drive example.}--- As an illustration, we fix the manifold $M = \mathbb{T}^2$, with two coordinates $\boldsymbol{\varphi}_t = (\varphi_t^1, \varphi^2_t)$, with parametrization $\varphi_t^i = \varphi_0^i + \omega_i t$. We then choose $\Omega_1(\boldsymbol{\boldsymbol{\varphi_t}}) = m-\cos(\varphi^1_t)-\cos(\varphi^2_t)$, $\Omega_2(\boldsymbol{\varphi}_t) = \sin\varphi_t^1$, $\Omega_3(\boldsymbol{\varphi}_t ) = \sin\varphi_t^2$. This gives nontrivial Euler class $\chi_{21} = -\chi_{12} = \pm2$, when $m$ is between $0$ and $\pm 2$.

\begin{figure*}[t]
    \centering
    \includegraphics[width=0.95\textwidth]{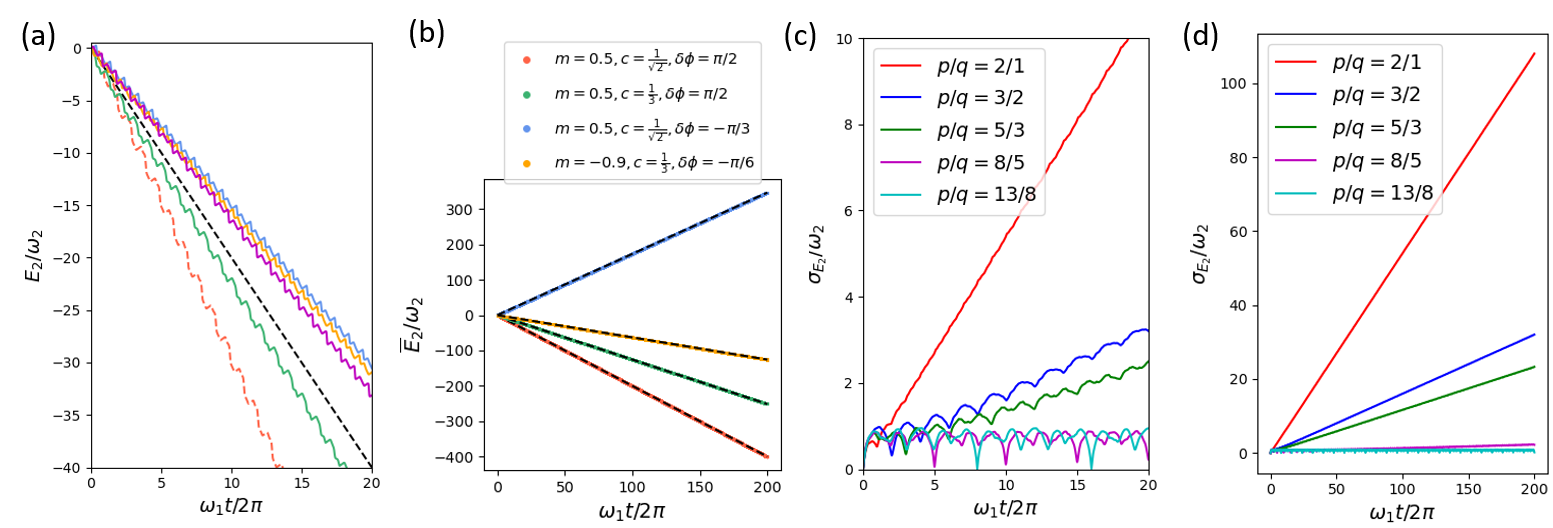}
    \caption{(a) Pumped energy $E_2(\boldsymbol{\varphi}_t)$ into drive 2 of the two-tone drive example. Five trajectories at different random initial phases are shown in colors. The black dashed line is phase-averaged result based on Eq.~(\ref{eq:central}). The parameters used are $m = 0.5, c = 1/\sqrt{2}, \delta\phi = \pi/2, p/q = 3/2$.  (b) Pumped energy $\overline{E}_2(t)$ into drive 2 averaged over 400 random initial phases sampled uniformly from $0$ to $2\pi$. The ratio is fixed at $p/q=3/2$. The different sets of other parameters that control the energy pumping power are indicated in different color. The black dashed line is the analytical result.  (c,d) Short- and long-time behaviors of standard deviation $\sigma_{E_2}$ of $E_2$ at different $p/q$ ratios for the 400 trajectories from different initializations. We fix $c=1/\sqrt{2}$, $\delta\phi = \pi/2$, $m = 0.5$. The common parameters in all figures are  $\omega = 0.4$, $\Delta = 1$.}
    \label{fig:fig2}
\end{figure*}

When $\omega_1/\omega_2$ is a irrational number, the time-dependence becomes quasiperiodic,  whereas if $\omega_1/\omega_2 = q/p$, with integers $p,q$, the system is time-periodic. Experimentally, the ratio of the two frequencies is always a rational number and we thus write $\omega_1 = \omega$, and $\omega_2 =(p/q) \omega $ in the following. The quasiperiodic case can be obtained as we can approximate any irrational number as a integer ratio. For example, the golden ratio $\frac{\sqrt{5}+1}{2}\simeq \frac{F_{N+1}}{F_N}$ where $F_N$ is the $N$th Fibonacci number~\cite{Harald2025prl}. This approximation gets better as $N$ increases.

To test our analytical result in Eq.~(\ref{eq:central}), we numerically evolve an initial state prepared in the ground state manifold of $H_0(\boldsymbol{\varphi}_0)$, with a random chosen initial phases $(\varphi^1_0, \varphi^2_0)$, and compute the pumped energy for drive $2$,  $E_2(\boldsymbol{\varphi}_t) = -E_1(\boldsymbol{\varphi}_t)$, according to Eq.~(\ref{eq:pumping}).
In Fig.~\ref{fig:fig2} (a), we show $E_2(\boldsymbol{\varphi}_t)$ in colors, for five different trajectories initialized at random phases for parameters $m = 1, c = 1/\sqrt{2}, \delta\phi = \pi/2, p/q = 3/2$.  We see that the magnitude of pumped energy overall grows linearly in time. On top of that, small oscillations are also visible. Since $p/q$ is far away from a irrational number, the slope of the linear trends deviates from the one for the phase-averaged rate $\overline{E}_2(t)/\omega_2 = -\omega_1 t/\pi$, indicated by the black lines.

In Fig.~\ref{fig:fig2} (b), we show the pumped energy $\overline{E}_2(t)$ at different set of parameters, averaged over 400 trajectories at different random initial phases uniformly sampled between $0$ and $2\pi$. We see a robust energy pumping at steady rates, coincide with the  depicted in black dashed lines, obtained from the phase-averaged pumping power calculated analytically in Eq.~(\ref{eq:central}). Notably, the direction and magnitude of the pumping power can be controlled by either tuning the Hamiltonian parameter $m$, or choosing the initial state parameters $c$ and $\delta\phi$.

To quantify how does the pumped energy deviate from its phase-averaged value, we introduce the standard deviation of $E_2$,
\begin{equation}
    \sigma_{E_2}(t) = \sqrt{\sum_{i=1}^N\frac{(E_2(\boldsymbol{\varphi}^{(i)}_t) - \overline{E}_2(t))^2}{N}},
\end{equation}
where we consider $N$ trajectories $\{\boldsymbol{\varphi}^{(i)}_t\}$ indexed by $i$ that give different pumped energy $E_2(\boldsymbol{\varphi}^{(i)}_t)$. In Figs.~\ref{fig:fig2} (c,d), we show the results at short and long time scale with $N=400$ at different $p/q$ ratios, generated by ratios of neighboring Fibonacci numbers.
We find that $\sigma_{E_2}$ overall increases linearly in time, with additional small oscillations.
The slope of this linear trend, due to Eq.~(\ref{eq:single_trajectory}), is proportional to the difference between the Euler form averaged over subregion explored by individual trajectory $\boldsymbol{\varphi}_t^{(i)}$, and the one averaged over the entire manifold (which gives the Euler class $\chi$). Note that for rational $p/q$, the trajectory always form a closed orbit on $\mathbb{T}^2$. When $p,q$ are large integers, the trajectory is a huge orbit of length $\sim 2\pi q$ wrapping around the $\mathbb{T}^2$, and thus explore most of the region of the manifold. Hence, the deviation from the trajectory-averaged value decays as the ratio $p/q$ approches the quasiperiodic limit.

\emph{Conclusions and outlook.}---In this work we introduced a non-abelian geometric quantum energy pump based on a transitionless geometric drive that confines the dynamics to a degenerate subspace while allowing evolution at arbitrary speed. The resulting pump delivers a steady and persistent energy-transfer rate and, crucially, provides high tunability. We showed that the pumping power has a clear geometric origin: it is governed by the non-abelian Berry curvature associated with the degenerate manifold. Focusing on a tripod implementation driven by two tones at distinct frequencies, we computed the energy transferred from one drive to the other and demonstrated analytically and numerically that the pumping power factorizes into two independently controllable ingredients: a state-dependent prefactor set by the initial superposition within the degenerate subspace, and a topological prefactor set by the Euler class of the real-symmetric Hamiltonian $H_0$.

Experimentally, the energy transfer corresponds to a net photon flux and can be accessed using the sideband-spectroscopy protocol proposed in Ref.~\cite{Luneau2022}. A promising route is to realize the tripod structure in a fluxonium circuit~\cite{Setiawan2021}, which can exhibit coherence times approaching the millisecond scale~\cite{Somoroff2023}. The device can be embedded in a cavity with a single port coupled to a transmission line: the incoming modes deliver the (modulated) tripod drives, while the outgoing reflected field will be analyzed by the power spectrum analyzer. The photon flux can then be extracted from the measured imbalance of the sideband spectral weights around the relevant microwave carriers. In a typical fluxonium setting, the relevant level splittings are in the GHz range, while the modulation frequencies $\omega_{1,2}$ can be in the MHz range. This separation of scales suggests that an appreciable amount of transferred energy can accumulate over a few microseconds—long compared with $2\pi/\omega_{1,2}$ yet comfortably shorter than the coherence time. A detailed device-level model is beyond the scope of this Letter and will be presented elsewhere.

Several directions merit further study. On the theory side, the appearance of the Euler class in our two-tone protocol relies on a real two-dimensional degenerate subspace and a two-dimensional control manifold, consistent with $H^2(\mathbb{T}^2,\mathbb{Z})=\mathbb{Z}$. For higher degeneracy $r\geq 3$ on $\mathbb{T}^2$, no analogous Euler-class invariant exists because $H^{r}(\mathbb{T}^2,\mathbb{Z})=0$ for $r\geq2$. Realizing topological invariants associated with higher-rank real bundles therefore likely requires introducing additional independent drive phases (i.e., more frequencies), thereby enlarging the effective control manifold; exploring such multi-tone generalizations is an interesting open problem. On the practical side, non-abelian energy pumps could serve as tunable modules for quantum technology, e.g., as transducers or chargers. Moreover, because the pumping power is directly sensitive to phase coherence within the degenerate manifold, the same setup may provide an experimentally accessible probe of coherence in multilevel quantum states. Establishing these applications will require detailed modeling of the pump when it is coupled to external devices and environments.

\emph{Acknowledgment.}--- This work is supported by the US National Science Foundation (NSF) Grant No. PHY-2216774.
The numerical simulation is supported by NSF instrument grant  DMR-2406524.

\bibliography{manuscript}

\clearpage
\begin{widetext}
\section*{Supplemental Material}

\section{Properties of Kato gauge potential (KGP)}
Consider a smooth Hamiltonian $H_{0} (\boldsymbol{\varphi})$ with
parameters $\boldsymbol{\varphi}=(\varphi^{1},\varphi^{2},\dots)$
defined on a smooth manifold $M$. Let us denote the eigenspace projector
as $\Pi_{\boldsymbol{\varphi}}^{n}$. In the following, we provide several theorems that describe the properties of the KGP. 
\begin{theorem}
\label{thm1}Define the KGP along the direction $\varphi_{\mu}$
as 
\begin{equation}
\mathcal{A}_{\mu}(\boldsymbol{\varphi})=\frac{i}{2}\sum_{n}[\partial_{\mu}\Pi_{\boldsymbol{\varphi}}^{n},\Pi_{\boldsymbol{\varphi}}^{n}]
\end{equation}
 where $\partial_{\mu}\equiv\partial/\partial\varphi^{\mu}$. We have
\begin{equation}
\Pi_{\boldsymbol{\varphi}}^{n}\mathcal{A}_{\mu}(\boldsymbol{\varphi})\Pi_{\boldsymbol{\varphi}}^{n}=0
\end{equation}
\end{theorem}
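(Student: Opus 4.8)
The plan is to prove the identity directly by expanding the definition of $\mathcal{A}_\mu$ and invoking only two elementary algebraic facts about spectral projectors: idempotency, $(\Pi^n_{\boldsymbol{\varphi}})^2=\Pi^n_{\boldsymbol{\varphi}}$, and mutual orthogonality, $\Pi^n_{\boldsymbol{\varphi}}\Pi^m_{\boldsymbol{\varphi}}=\delta_{nm}\Pi^n_{\boldsymbol{\varphi}}$. Differentiating the first relation with respect to $\varphi^\mu$ yields the key relation $\partial_\mu\Pi^n=(\partial_\mu\Pi^n)\Pi^n+\Pi^n(\partial_\mu\Pi^n)$; multiplying this on both sides by $\Pi^n$ and using idempotency once more gives $\Pi^n(\partial_\mu\Pi^n)\Pi^n=0$, i.e.\ the derivative of a projector has no component within the block it labels. (Indices $\boldsymbol{\varphi}$ are suppressed for brevity.)

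Next I would insert $\mathcal{A}_\mu=\tfrac{i}{2}\sum_m[\partial_\mu\Pi^m,\Pi^m]$ into $\Pi^n\mathcal{A}_\mu\Pi^n$ and handle the sum term by term, separating the $m=n$ contribution from the $m\neq n$ ones. For $m=n$ one has $\Pi^n[\partial_\mu\Pi^n,\Pi^n]\Pi^n=\Pi^n(\partial_\mu\Pi^n)\Pi^n\Pi^n-\Pi^n\Pi^n(\partial_\mu\Pi^n)\Pi^n$, and by $(\Pi^n)^2=\Pi^n$ the two terms coincide and cancel (equivalently, this is just the key relation from the first step). For $m\neq n$ the commutator expands into $\Pi^n(\partial_\mu\Pi^m)\Pi^m\Pi^n$ and $\Pi^n\Pi^m(\partial_\mu\Pi^m)\Pi^n$; the rightmost pair of the first term is $\Pi^m\Pi^n=0$ and the leftmost pair of the second is $\Pi^n\Pi^m=0$, so both vanish by orthogonality. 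Summing over $m$ gives $\Pi^n\mathcal{A}_\mu\Pi^n=0$.

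There is essentially no obstacle in this argument; the only point requiring a little care is the operator ordering when distributing $\Pi^n$ through the commutator and recognizing which adjacent pair of projectors collapses in the $m\neq n$ terms. It is worth remarking that completeness $\sum_n\Pi^n=\mathbb{1}$ is not needed here (it enters other properties of the KGP), so the statement in fact holds for the projector onto any single eigenspace considered in isolation.
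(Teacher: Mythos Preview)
Your proof is correct. The paper in fact states this theorem without proof, treating it as an elementary consequence of projector algebra; your argument is the standard one, and the key identity $\Pi^n(\partial_\mu\Pi^n)\Pi^n=0$ that you derive is precisely what the paper later invokes (without derivation) in the proof of Theorem~\ref{thm3}.
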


Consider a smooth curve on $M$ denoted as $\boldsymbol{\varphi}_{t}$
parameterized by $t$ (viewed as time). Let us initialize the system
in an eigenstate $\ket{\psi(0)}\equiv\Pi_{\boldsymbol{\varphi}_{0}}^{n}\ket{\psi(0)}$
of $H(\boldsymbol{\varphi}_{t=0})$.  
\begin{theorem}
\label{thm2}
Let us construct the time-dependent Hamiltonian $H(\boldsymbol{\varphi}_t)  = H_0(\boldsymbol{\varphi}_t) + \mathcal{A}_t(\boldsymbol{\varphi}_t)$, and
consider the Schr\"odinger equation 
\begin{equation}
i\partial_{t}\ket{\psi(t)}=H(\boldsymbol{\varphi}_t)\ket{\psi(t)},
\end{equation}
under the initial condition $\ket{\psi(0)}=\Pi_{\boldsymbol{\varphi}_0}^n\ket{\psi(0)}$, 
then $\Pi_{\boldsymbol{\varphi}_{t}}^{n}\ket{\psi(t)}=\ket{\psi(t)}$
at any time $t$. Thus, $\mathcal{A_t}$ is the counterdiabatic term.
\end{theorem}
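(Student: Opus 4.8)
\emph{Proof proposal.} The plan is to upgrade the statement to an intertwining relation between the spectral projector and the evolution operator. Let $U(t)$ be the propagator generated by $H(\boldsymbol{\varphi}_t)$, i.e.\ $i\partial_t U(t)=H(\boldsymbol{\varphi}_t)U(t)$ with $U(0)=\mathbf{1}$, write $P_t\equiv\Pi^n_{\boldsymbol{\varphi}_t}$, and consider the operator $Q(t)=U(t)^\dagger P_t\,U(t)$. Differentiating and using $\partial_t U=-iHU$ together with the Hermiticity of $H$ gives $\dot Q(t)=U(t)^\dagger\big(\dot P_t+i[H(\boldsymbol{\varphi}_t),P_t]\big)U(t)$, so it suffices to establish the von Neumann--type identity $\dot P_t=-i[H(\boldsymbol{\varphi}_t),P_t]$. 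Granting it, $Q(t)=Q(0)=\Pi^n_{\boldsymbol{\varphi}_0}$, hence $P_t=U(t)\,\Pi^n_{\boldsymbol{\varphi}_0}\,U(t)^\dagger$; acting on $\ket{\psi(t)}=U(t)\ket{\psi(0)}$ with the hypothesis $\ket{\psi(0)}=\Pi^n_{\boldsymbol{\varphi}_0}\ket{\psi(0)}$ then yields $P_t\ket{\psi(t)}=\ket{\psi(t)}$, which is the claim; the closing sentence of the theorem is then just the definition of a counterdiabatic term.

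The von Neumann identity is where the explicit form of the KGP enters. Since $P_t$ is a spectral projector of $H_0(\boldsymbol{\varphi}_t)$ one has $H_0(\boldsymbol{\varphi}_t)P_t=\epsilon_n(\boldsymbol{\varphi}_t)P_t=P_tH_0(\boldsymbol{\varphi}_t)$, so $[H_0(\boldsymbol{\varphi}_t),P_t]=0$ and the identity reduces to $\dot P_t=-i[\mathcal{A}_t(\boldsymbol{\varphi}_t),P_t]$. Expanding $\mathcal{A}_t=\sum_\mu\dot\varphi_t^\mu\mathcal{A}_\mu$ and $\dot P_t=\sum_\mu\dot\varphi_t^\mu\,\partial_\mu\Pi^n_{\boldsymbol{\varphi}_t}$, and using that the tangent vector $\dot{\boldsymbol{\varphi}}_t$ is arbitrary, it is enough to prove the pointwise geometric identity $\partial_\mu\Pi^n=-i[\mathcal{A}_\mu,\Pi^n]$ for each direction $\mu$ and each band $n$.

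To prove this pointwise identity I would use only completeness $\sum_m\Pi^m=\mathbf{1}$ (hence $\sum_m\partial_\mu\Pi^m=0$), mutual orthogonality $\Pi^m\Pi^{m'}=\delta_{mm'}\Pi^m$, and their $\partial_\mu$-derivatives. A convenient first step is to rewrite the KGP in one-sided form, $\mathcal{A}_\mu=\tfrac{i}{2}\sum_m[\partial_\mu\Pi^m,\Pi^m]=-i\sum_m\Pi^m(\partial_\mu\Pi^m)$, which follows by combining $(\partial_\mu\Pi^m)\Pi^m+\Pi^m(\partial_\mu\Pi^m)=\partial_\mu\Pi^m$ with $\sum_m\partial_\mu\Pi^m=0$. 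Substituting into $-i[\mathcal{A}_\mu,\Pi^n]$, expanding the commutator, and applying $\Pi^n\Pi^m=\delta_{nm}\Pi^n$, the off-diagonal relation $(\partial_\mu\Pi^m)\Pi^n=-\Pi^m(\partial_\mu\Pi^n)$ valid for $m\neq n$, and the sandwich relation $\Pi^n(\partial_\mu\Pi^n)\Pi^n=0$ (equivalent to Theorem~\ref{thm1} specialized to band $n$), the sums collapse to $\partial_\mu\Pi^n$. As a sanity check I would rerun the bookkeeping on a generic two-level family $\Pi^\pm=\tfrac12(\mathbf{1}\pm\hat n\cdot\boldsymbol{\sigma})$, where the identity reduces to the elementary vector identity $(\hat n\times\partial_\mu\hat n)\times\hat n=\partial_\mu\hat n$ for the unit vector $\hat n(\boldsymbol{\varphi})$.

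I expect the only genuine obstacle to be the block bookkeeping in this last step: tracking which $\Pi^a(\cdots)\Pi^b$ blocks survive after repeatedly using orthogonality and its derivatives is routine but easy to get wrong, so I would organize everything around the decomposition $\mathbf{1}=\sum_m\Pi^m$, splitting each operator into its diagonal and off-diagonal parts. One could instead argue in the spirit of the inductive remark used later for the tripod system, bounding the leakage $\|(\mathbf{1}-P_t)\ket{\psi(t)}\|$ by a Gr\"onwall-type inequality; that repackages but does not eliminate the same geometric input, so the intertwining route above is cleaner and makes most transparent why the KGP---rather than any other counterdiabatic choice---does the job.
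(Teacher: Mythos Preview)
Your proposal is correct and takes a genuinely different route from the paper. The paper argues by infinitesimal induction at the level of states: assuming $\Pi^n_{\boldsymbol{\varphi}_t}\ket{\psi(t)}=\ket{\psi(t)}$, it expands $\Pi^n_{\boldsymbol{\varphi}_{t+\delta t}}\ket{\psi(t+\delta t)}$ to first order in $\delta t$ and uses $[H_0,\Pi^n]=0$ together with Theorem~\ref{thm1} (the sandwich relation $\Pi^n\mathcal{A}_t\Pi^n=0$) to close the induction. You instead work at the operator level, establishing the intertwining $P_t=U(t)\,\Pi^n_{\boldsymbol{\varphi}_0}\,U(t)^\dagger$ via the von Neumann identity $\dot P_t=-i[\mathcal{A}_t,P_t]$, which you reduce to the pointwise identity $\partial_\mu\Pi^n=-i[\mathcal{A}_\mu,\Pi^n]$. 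That pointwise identity is exactly what the paper later states (more generally, for any partial sum of projectors) as Theorem~\ref{thm3}; so in effect you have reordered the logic, making Theorem~\ref{thm3} the engine of Theorem~\ref{thm2} rather than a subsequent result. Your route buys a strictly stronger conclusion---the operator intertwining, not just preservation of a single initial state---and makes the geometric content (the Heisenberg transport of the projector by the KGP) explicit; the paper's route is slightly more elementary in that it only needs the diagonal vanishing of $\mathcal{A}_t$ and never assembles the full commutator identity.
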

\begin{proof}
We can consider an infinitesimal $\delta t\to0$ time evolution
(neglecting terms $o(\delta t^{2})$)
\begin{equation}
\ket{\psi(t+\delta t)}=\ket{\psi(t)}-i\delta t\left[H_{0}(\boldsymbol{\varphi}_{t})+\mathcal{A}_{t}(\boldsymbol{\varphi}_{t})\right]\ket{\psi(t)},
\end{equation}
and compute
\begin{align}
\Pi_{\boldsymbol{\varphi}_{t+\delta t}}^{n}\ket{\psi(t+\delta t)} & =\left[\Pi_{\boldsymbol{\varphi}_{t}}^{n}+\delta t\partial_{t}\Pi_{\boldsymbol{\varphi}_{t}}^{n}\right]\left[1-i\delta tH_{0}(\boldsymbol{\varphi}_{t})-i\delta t\mathcal{A}_{t}(\boldsymbol{\varphi}_{t})\right]\ket{\psi(t)}\\
 & =\left[\Pi_{\boldsymbol{\varphi}_{t}}^{n}-\delta t\left(i\Pi_{\boldsymbol{\varphi}_{t}}^{n}H_{0}(\boldsymbol{\varphi}_{t})+i\Pi_{\boldsymbol{\varphi}_{t}}^{n}\mathcal{A}_{t}(\boldsymbol{\varphi}_{t})-\partial_{t}\Pi_{\boldsymbol{\varphi}_{t}}^{n}\right)\right]\ket{\psi(t)}.
\end{align}
Note that $[H_{0}(\boldsymbol{\varphi}_{t}),\Pi_{\boldsymbol{\varphi}_{t}}^{n}]=0$
and $\mathcal{A}_{t}(\boldsymbol{\varphi}_{t})$ has no matrix elements
between states at the same energy, namely $\Pi_{\boldsymbol{\varphi}_{t}}^{n}\mathcal{A}_{t}(\varphi_{t})\Pi_{\boldsymbol{\varphi}_{t}}^{n}=0$, by Theorem~\ref{thm1}.
Assuming $\Pi_{\boldsymbol{\varphi}_{t}}^{n}\ket{\psi(t)}=\ket{\psi(t)}$,
we have
\begin{align*}
\mathcal{A}_{t}(\varphi_{t})\ket{\psi(t)} & =i(\partial_{t}\Pi_{\boldsymbol{\varphi}_{t}}^{n})\ket{\psi(t)}.
\end{align*}
and thus
\begin{equation}
\Pi_{\boldsymbol{\varphi}_{t+\delta t}}^{n}\ket{\psi(t+\delta t)}=\left[1-i\delta t\left(H_{0}(\boldsymbol{\varphi}_{t})+\mathcal{A}_{t}(\boldsymbol{\varphi}_{t})\right)\right]\ket{\psi(t)}=\ket{\psi(t+\delta t)}.
\end{equation}
Hence, $\Pi^n_{\boldsymbol{\varphi}_{t}}\ket{\psi(t)}=\ket{\psi(t)}$
whenever $\Pi_{\boldsymbol{\varphi}_{0}}^{n}\ket{\psi(0)}=\ket{\psi(0)}$.
\end{proof}

Because of Theorem~\ref{thm2}, if one takes $\{\ket{n\alpha}\}$ as a basis for the eigensubspace of $H_0(\boldsymbol{\varphi}_0)$ at energy $\epsilon_n(\boldsymbol{\varphi}_0)$, we can define a parallelly transported basis along the trajectory $\boldsymbol{\varphi}_t$ as $\{\ket{n\alpha(\boldsymbol{\varphi}_t)}\}$, with $\ket{n\alpha(\boldsymbol{\varphi}_0)}\equiv\ket{n\alpha}$. Explictly, we can write
\begin{equation}
    \ket{n\alpha(\boldsymbol{\varphi}_t)} = \mathcal{W}(\boldsymbol{\varphi}_t\leftarrow\boldsymbol{\varphi}_0)\ket{n\alpha(\boldsymbol{\varphi}_0)},
\end{equation}
with Wilson line operator generated by the KGP $\mathcal{A}_t$
\begin{equation}
    \mathcal{W}(\boldsymbol{\varphi}_t\leftarrow\boldsymbol{\varphi}_0) = 
    \mathcal{T}\exp(-i\int_{0}^{t}dt'\,\mathcal{A}_{t}(\boldsymbol{\varphi}_{t'})) =
    \mathcal{P}\exp(-i\int_{\boldsymbol{\varphi}_t}  d\boldsymbol{\varphi} \cdot \boldsymbol{\mathcal{A}}(\boldsymbol{\varphi})),
\end{equation}
where $\boldsymbol{\mathcal{A}} = (\mathcal{A}_1,\mathcal{A}_2,\dots,\mathcal{A}_d)$. This means that from an initial state $\ket{\psi(0)}$ in this subpsace, parametrized as
\begin{equation}
    \ket{\psi(0)} = \sum_\alpha c_\alpha \ket{n\alpha(\boldsymbol{\varphi}_0)},
\end{equation}
the time-evolved state can be written as
\begin{equation}
    \ket{\psi(t)} = e^{-i\int_0^t ds\, \epsilon_n(\boldsymbol{\varphi}_s)} \mathcal{W}(\boldsymbol{\varphi}_t\leftarrow\boldsymbol{\varphi}_0)=e^{-i\int_0^t ds\, \epsilon_n(\boldsymbol{\varphi}_s)} \sum_\alpha c_\alpha \ket{n\alpha(\boldsymbol{\varphi}_t)}.
\end{equation}
This implies that the expectation value of any operator $\mathcal{F}$, the expectation value can be computed as 
\begin{equation}
    \bra{\psi(s)}\mathcal{F}\ket{\psi(s)} = \sum_{\alpha\beta}c_\alpha^* c_\beta F^{\alpha\beta},\quad F^{\alpha\beta} = \bra{\alpha}\mathcal{F}\ket{\beta}.
\end{equation}

The following two theorems are used to derive the energy pumping formula (Eq.~(6)) of the main text. Particularly, Theorem~\ref{thm4} relates the derivative $\partial_\mu \mathcal{A}_t$ to the non-abelian Berry curvature tensor $\mathcal{F}_{t\mu}$.

\begin{theorem}
\label{thm3}
Let us construct a generic projector $\Pi_{\boldsymbol{\varphi}}=\sum_{m\in S}\Pi_{\boldsymbol{\varphi}}^{m}$
as a partial sum of indices over a set $S$. Then
\begin{equation}
\partial_{\mu}\Pi_{\boldsymbol{\varphi}}=-i[\mathcal{A}_{\mu}(\boldsymbol{\varphi}),\Pi_{\boldsymbol{\varphi}}].
\end{equation}
\end{theorem}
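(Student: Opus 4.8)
The plan is to compute $\partial_\mu \Pi_{\boldsymbol{\varphi}}$ directly from the definition of the KGP and exploit the projector identities $\Pi^2 = \Pi$ and its consequences. First I would recall that for any projector, differentiating $\Pi_{\boldsymbol{\varphi}}^2 = \Pi_{\boldsymbol{\varphi}}$ gives $(\partial_\mu \Pi)\Pi + \Pi(\partial_\mu \Pi) = \partial_\mu \Pi$, which means $\partial_\mu \Pi$ is purely "off-diagonal" with respect to the splitting induced by $\Pi$: sandwiching between $\Pi$ and $\Pi$ kills it, and so does sandwiching between $(1-\Pi)$ and $(1-\Pi)$. Equivalently, $\Pi(\partial_\mu\Pi)\Pi = 0$ and $(1-\Pi)(\partial_\mu\Pi)(1-\Pi)=0$, so $\partial_\mu \Pi = \Pi(\partial_\mu\Pi)(1-\Pi) + (1-\Pi)(\partial_\mu\Pi)\Pi$. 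The same identity holds for each eigenprojector $\Pi^n_{\boldsymbol{\varphi}}$ individually, and since the $\Pi^n$ are mutually orthogonal ($\Pi^n \Pi^m = \delta_{nm}\Pi^n$), we also have $\Pi^n (\partial_\mu \Pi^m) \Pi^n = 0$ for all $n,m$.

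Next I would expand the commutator on the right-hand side. Writing $\mathcal{A}_\mu = \tfrac{i}{2}\sum_n [\partial_\mu \Pi^n, \Pi^n]$, and using $\Pi_{\boldsymbol{\varphi}} = \sum_{m\in S}\Pi^m$, I get
\begin{equation}
-i[\mathcal{A}_\mu, \Pi_{\boldsymbol{\varphi}}] = \tfrac{1}{2}\sum_n \sum_{m\in S}\bigl[\,[\partial_\mu\Pi^n,\Pi^n],\,\Pi^m\,\bigr].
\end{equation}
The double commutator expands into four terms of the form $\pm (\partial_\mu\Pi^n)\Pi^n\Pi^m$ etc. Now the strategy is to split the sum over $n$ into the part with $n\in S$ and the part with $n\notin S$, use orthogonality $\Pi^n\Pi^m = \delta_{nm}\Pi^n$ to collapse products, and use the completeness relation $\sum_n \Pi^n = \mathbf{1}$ (equivalently $\sum_{n\in S}\Pi^n = \Pi$ and $\sum_{n\notin S}\Pi^n = 1-\Pi$) to resum. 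I expect the $n\in S$ terms to reconstruct $\Pi(\partial_\mu\Pi)$-type pieces and the $n\notin S$ terms the complementary ones; combining with the off-diagonal decomposition of $\partial_\mu\Pi$ from the first paragraph should yield exactly $\partial_\mu\Pi_{\boldsymbol{\varphi}}$. A cleaner route, which I would probably present instead, is to first establish the single-projector version $\partial_\mu \Pi^n = -i[\mathcal{A}_\mu^{(n)}, \Pi^n]$ where $\mathcal{A}_\mu^{(n)} = \tfrac{i}{2}[\partial_\mu\Pi^n,\Pi^n]$ is the $n$th summand, using only $\Pi^n(\partial_\mu\Pi^n)\Pi^n = 0$ and the off-diagonal decomposition; then observe that $-i[\mathcal{A}_\mu^{(m)},\Pi^n] = 0$ whenever $m\neq n$ because all terms contain a factor $\Pi^m\Pi^n = 0$ or $\Pi^n(\cdots)\Pi^n$ with the middle being $\partial_\mu \Pi^m$ which vanishes when squeezed appropriately; hence $-i[\mathcal{A}_\mu,\Pi^n] = -i[\mathcal{A}_\mu^{(n)},\Pi^n] = \partial_\mu \Pi^n$, and summing over $n\in S$ gives the claim.

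The main obstacle is bookkeeping: making sure that the cross terms $[\mathcal{A}_\mu^{(m)}, \Pi^n]$ with $m \neq n$ genuinely vanish. This requires carefully using that $\partial_\mu\Pi^m$ is off-diagonal \emph{with respect to $\Pi^m$} (not $\Pi^n$), together with $\Pi^m\Pi^n = 0$, to check each of the (up to four) monomials in the expanded double commutator. I do not anticipate any analytic difficulty — everything is algebra with idempotents and their derivatives — but the indexed sums must be handled with care so that no term is double-counted or dropped. Once the single-projector identity is in hand, the partial-sum statement follows immediately by linearity.
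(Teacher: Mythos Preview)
Your first approach---expanding the double commutator $\tfrac12\sum_n\sum_{m\in S}\bigl[[\partial_\mu\Pi^n,\Pi^n],\Pi^m\bigr]$, using $\Pi^n\Pi^m=\delta_{nm}\Pi^n$ and $\Pi^n(\partial_\mu\Pi^n)\Pi^n=0$, and resumming---is exactly what the paper does and would go through with the bookkeeping you describe.

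The ``cleaner route'' you say you would probably present instead has a real error. The single-summand identity you want, $\partial_\mu\Pi^n=-i[\mathcal{A}_\mu^{(n)},\Pi^n]$, is off by a factor of two: a direct computation gives
\[
-i[\mathcal{A}_\mu^{(n)},\Pi^n]=\tfrac12\bigl[[\partial_\mu\Pi^n,\Pi^n],\Pi^n\bigr]
=\tfrac12\bigl((\partial_\mu\Pi^n)\Pi^n+\Pi^n(\partial_\mu\Pi^n)\bigr)=\tfrac12\,\partial_\mu\Pi^n.
\]
The other half is supplied precisely by the cross terms you assert vanish. For $m\neq n$, expanding $\tfrac12\bigl[[\partial_\mu\Pi^m,\Pi^m],\Pi^n\bigr]$ and using $\Pi^m\Pi^n=0$ leaves
\[
-\tfrac12\bigl(\Pi^m(\partial_\mu\Pi^m)\Pi^n+\Pi^n(\partial_\mu\Pi^m)\Pi^m\bigr),
\]
which is generically nonzero: $\partial_\mu\Pi^m$ is off-diagonal with respect to $\Pi^m$, so it maps the range of $\Pi^m$ into $1-\Pi^m$, and that image can (and will) overlap $\Pi^n$. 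Your heuristic ``all terms contain $\Pi^m\Pi^n=0$ or $\Pi^n(\cdots)\Pi^n$'' fails here because the surviving monomials have the form $\Pi^m(\partial_\mu\Pi^m)\Pi^n$, with the derivative sandwiched between \emph{different} projectors. Only after summing over all $m\neq n$ (using $\sum_{m\neq n}\Pi^m=1-\Pi^n$) do these cross terms collapse to the missing $\tfrac12\partial_\mu\Pi^n$. So the cleaner route is not actually cleaner: you cannot isolate a single $n$ and drop the rest; the full sum over eigenspaces in $\mathcal{A}_\mu$ is essential. Stick with the direct expansion.
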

\begin{proof}
Let us compute the right hand side of the above equation
\begin{align}
-i[\mathcal{A}_{\mu},\Pi_{\boldsymbol{\varphi}}] & =\frac{1}{2}\sum_{n}\sum_{m\in S}\left[[\partial_{\mu}\Pi_{\boldsymbol{\varphi}}^{n},\Pi_{\boldsymbol{\varphi}}^{n}],\Pi_{\boldsymbol{\varphi}}^{m}\right]\nonumber \\
 & =\frac{1}{2}\sum_{n}\sum_{m\in S}\left((\partial_{\mu}\Pi_{\boldsymbol{\varphi}}^{n})\Pi_{\boldsymbol{\varphi}}^{n}\Pi_{\boldsymbol{\varphi}}^{m}-\Pi_{\boldsymbol{\varphi}}^{n}(\partial_{\mu}\Pi_{\boldsymbol{\varphi}}^{n})\Pi_{\boldsymbol{\varphi}}^{m}-\Pi_{\boldsymbol{\varphi}}^{m}(\partial_{\mu}\Pi_{\boldsymbol{\varphi}}^{n})\Pi_{\boldsymbol{\varphi}}^{n}+\Pi_{\boldsymbol{\varphi}}^{m}\Pi_{\boldsymbol{\varphi}}^{n}(\partial_{\mu}\Pi_{\boldsymbol{\varphi}}^{n})\right)\nonumber \\
 & =\frac{1}{2}\sum_{m\in S}\left[(\partial_{\mu}\Pi_{\boldsymbol{\varphi}}^{m})\Pi_{\boldsymbol{\varphi}}^{m}+\Pi_{\boldsymbol{\varphi}}^{m}(\partial_{\mu}\Pi_{\boldsymbol{\varphi}}^{m})\right]-\frac{1}{2}\sum_{m\in S}\sum_{n\neq m}\left[\Pi_{\boldsymbol{\varphi}}^{n}(\partial_{\mu}\Pi_{\boldsymbol{\varphi}}^{n})\Pi_{\boldsymbol{\varphi}}^{m}+\Pi_{\boldsymbol{\varphi}}^{m}(\partial_{\mu}\Pi_{\boldsymbol{\varphi}}^{n})\Pi_{\boldsymbol{\varphi}}^{n}\right]\nonumber \\
 & =\frac{1}{2}\partial_{\mu}\Pi_{\boldsymbol{\varphi}}+\frac{1}{2}\sum_{m\in S}\sum_{n\neq m}\left[\Pi_{\boldsymbol{\varphi}}^{n}(\partial_{\mu}\Pi_{\boldsymbol{\varphi}}^{m})+(\partial_{\mu}\Pi_{\boldsymbol{\varphi}}^{m})\Pi_{\boldsymbol{\varphi}}^{n}\right]\nonumber \\
 & =\frac{1}{2}\partial_{\mu}\Pi_{\boldsymbol{\varphi}}+\frac{1}{2}\sum_{m\in S}\left[(1-\Pi_{\boldsymbol{\varphi}}^{m})(\partial_{\mu}\Pi_{\boldsymbol{\varphi}}^{m})+(\partial_{\mu}\Pi_{\boldsymbol{\varphi}}^{m})(1-\Pi_{\boldsymbol{\varphi}}^{m})\right]\nonumber \\
 & =\partial_{\mu}\Pi_{\boldsymbol{\varphi}}.
\end{align}
Note that in the third line, we have used the property that $\Pi_{\boldsymbol{\varphi}}^{n}(\partial_{\mu}\Pi_{\boldsymbol{\varphi}}^{n})\Pi_{\boldsymbol{\varphi}}^{n}=0.$
\end{proof}

\begin{theorem}
\label{thm4}Define the subspace projected Kato potential 
\begin{equation}
\mathcal{A}_{\mu}^{(\Pi)}(\boldsymbol{\varphi})=\Pi_{\boldsymbol{\varphi}}\mathcal{A}_{\mu}(\boldsymbol{\varphi})\Pi_{\boldsymbol{\varphi}}.
\end{equation}
Then
\begin{equation}
\Pi_{\boldsymbol{\varphi}}(\partial_{\mu}\mathcal{A}_{\nu}(\boldsymbol{\varphi}))\Pi_{\boldsymbol{\varphi}}=\Pi_{\boldsymbol{\varphi}}(\partial_{\mu}\mathcal{A}_{\nu}^{(\Pi)}(\boldsymbol{\varphi}))\Pi_{\boldsymbol{\varphi}}-\mathcal{F}_{\mu\nu}^{(\Pi)}(\boldsymbol{\varphi})
\end{equation}
where the last term is the non-abelian Berry curvature defined on
this subspace
\begin{equation}
\mathcal{F}_{\mu\nu}^{(\Pi)}(\boldsymbol{\varphi})=i\Pi_{\boldsymbol{\varphi}}[\partial_{\mu}\Pi_{\boldsymbol{\varphi}},\partial_{\nu}\Pi_{\boldsymbol{\varphi}}]\Pi_{\boldsymbol{\varphi}}.
\label{seq:berry_curvature}
\end{equation}
\end{theorem}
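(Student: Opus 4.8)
The plan is to reduce the identity to two elementary inputs. The first is the idempotency relation $\Pi_{\boldsymbol{\varphi}}(\partial_\mu\Pi_{\boldsymbol{\varphi}})\Pi_{\boldsymbol{\varphi}}=0$, obtained by differentiating $\Pi_{\boldsymbol{\varphi}}^2=\Pi_{\boldsymbol{\varphi}}$ and sandwiching the result between two copies of $\Pi_{\boldsymbol{\varphi}}$. The second is Theorem~\ref{thm3}, rewritten in block form as $\partial_\nu\Pi_{\boldsymbol{\varphi}}=-i(\mathcal{A}_\nu\Pi_{\boldsymbol{\varphi}}-\Pi_{\boldsymbol{\varphi}}\mathcal{A}_\nu)$; this is what lets one trade the off-diagonal blocks of $\mathcal{A}_\nu$ for derivatives of $\Pi_{\boldsymbol{\varphi}}$. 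Throughout I will suppress the argument $\boldsymbol{\varphi}$ and write $Q=1-\Pi$.

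First I would differentiate the definition $\mathcal{A}_\nu^{(\Pi)}=\Pi\,\mathcal{A}_\nu\,\Pi$ with the Leibniz rule and then sandwich both sides between $\Pi$'s; using $\Pi^2=\Pi$ this gives
\begin{equation}
\Pi(\partial_\mu\mathcal{A}_\nu^{(\Pi)})\Pi=\Pi(\partial_\mu\Pi)\mathcal{A}_\nu\Pi+\Pi(\partial_\mu\mathcal{A}_\nu)\Pi+\Pi\mathcal{A}_\nu(\partial_\mu\Pi)\Pi .
\end{equation}
This already isolates the target term $\Pi(\partial_\mu\mathcal{A}_\nu)\Pi$, so the whole content of the theorem becomes the claim that the two remaining correction terms add up to $-\mathcal{F}_{\mu\nu}^{(\Pi)}$.

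Next I would evaluate those two corrections by resolving the off-diagonal blocks of $\mathcal{A}_\nu$. Sandwiching the block form of Theorem~\ref{thm3} between $Q$ and $\Pi$, and between $\Pi$ and $Q$, yields $Q\mathcal{A}_\nu\Pi=iQ(\partial_\nu\Pi)\Pi$ and $\Pi\mathcal{A}_\nu Q=-i\Pi(\partial_\nu\Pi)Q$. Inserting a resolution $1=\Pi+Q$ immediately to the left (resp.\ right) of the $\mathcal{A}_\nu$ in the first (resp.\ second) correction term, the idempotency relation kills the $\Pi\mathcal{A}_\nu\Pi$ blocks, and the two identities just obtained collapse the survivors to $\Pi(\partial_\mu\Pi)\mathcal{A}_\nu\Pi=i\Pi(\partial_\mu\Pi)(\partial_\nu\Pi)\Pi$ and $\Pi\mathcal{A}_\nu(\partial_\mu\Pi)\Pi=-i\Pi(\partial_\nu\Pi)(\partial_\mu\Pi)\Pi$, after re-expanding $Q=1-\Pi$ and discarding the $\Pi(\partial\Pi)\Pi$ pieces once more. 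Their sum is $i\Pi[\partial_\mu\Pi,\partial_\nu\Pi]\Pi=\mathcal{F}_{\mu\nu}^{(\Pi)}$, and substituting back into the displayed equation gives the theorem. As a sanity check, for $S=\{n\}$ Theorem~\ref{thm1} forces $\mathcal{A}_\nu^{(\Pi)}=0$ and the formula reduces to $\Pi^n_{\boldsymbol{\varphi}}(\partial_\mu\mathcal{A}_\nu)\Pi^n_{\boldsymbol{\varphi}}=-\Pi^n_{\boldsymbol{\varphi}}\mathcal{F}^n_{\mu\nu}\Pi^n_{\boldsymbol{\varphi}}$, the relation used to obtain Eq.~(6) of the main text.

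Since everything here is projector bookkeeping, I do not expect a genuine obstacle. The one place where a term is easy to lose is the ordering of manipulations: one must first resolve the blocks of $\mathcal{A}_\nu$ through Theorem~\ref{thm3} and only afterwards invoke $\Pi(\partial\Pi)\Pi=0$, because applying the latter prematurely discards precisely the cross-terms that assemble into the curvature.
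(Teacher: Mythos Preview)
Your argument is correct and rests on the same two ingredients as the paper's proof: the Leibniz rule applied to $\Pi\mathcal{A}_\nu\Pi$, and Theorem~\ref{thm3}. The only difference is the direction in which Theorem~\ref{thm3} is deployed. The paper substitutes $\partial_\mu\Pi\to-i[\mathcal{A}_\mu,\Pi]$, obtaining the correction as $i\Pi\mathcal{A}_\mu(1-\Pi)\mathcal{A}_\nu\Pi-(\mu\leftrightarrow\nu)$, and then separately expands $\mathcal{F}_{\mu\nu}^{(\Pi)}=i\Pi[\partial_\mu\Pi,\partial_\nu\Pi]\Pi$ via the same substitution to verify the match. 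You instead trade the off-diagonal blocks $Q\mathcal{A}_\nu\Pi$ and $\Pi\mathcal{A}_\nu Q$ for derivatives of $\Pi$, so the correction assembles directly into the projector definition of $\mathcal{F}_{\mu\nu}^{(\Pi)}$ without a second computation; this is a hair shorter. One harmless slip in your narration: you announce that the two corrections should sum to $-\mathcal{F}_{\mu\nu}^{(\Pi)}$, but your own calculation (and the theorem) gives $+\mathcal{F}_{\mu\nu}^{(\Pi)}$, which is what you actually obtain.
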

\begin{proof}
Let us compute the projected derivative of the Kato gauge potential
(we drop the notation of explicit $\boldsymbol{\varphi}$-dependence
in the following for simplicity)
\begin{align*}
\Pi(\partial_{\mu}\mathcal{A}_{\nu})\Pi & =\partial_{\mu}\left(\Pi\mathcal{A}_{\nu}\Pi\right)-(\partial_{\mu}\Pi)\mathcal{A}_{\nu}\Pi-\Pi\mathcal{A}_{\nu}(\partial_{\mu}\Pi)\\
 & =\partial_{\mu}\mathcal{A}_{\nu}^{(\Pi)}+i[\mathcal{A}_{\mu},\Pi]\mathcal{A}_{\nu}\Pi+i\Pi\mathcal{A}_{\nu}[\mathcal{A}_{\mu},\Pi]\\
 & =\partial_{\mu}\mathcal{A}_{\nu}^{(\Pi)}+i\mathcal{A}_{\mu}\Pi\mathcal{A}_{\nu}\Pi-i\Pi\mathcal{A}_{\mu}\mathcal{A}_{\nu}\Pi+i\Pi\mathcal{A}_{\nu}\mathcal{A}_{\mu}\Pi-i\Pi\mathcal{A}_{\nu}\Pi\mathcal{A}_{\mu}.
\end{align*}
Applying projector $\Pi$ on from left and right on both sides of
the above equation, we obtain
\begin{equation}
\Pi(\partial_{\mu}\mathcal{A}_{\nu})\Pi=\Pi(\partial_{\mu}\mathcal{A}_{\nu}^{(\Pi)})\Pi-i\left[\Pi\mathcal{A}_{\mu}(1-\Pi)\mathcal{A}_{\nu}\Pi-\Pi\mathcal{A}_{\nu}(1-\Pi)\mathcal{A}_{\mu}\Pi\right]\label{eq:project_der}
\end{equation}
From the non-abelian Berry curvature defined in terms of the projector in Eq.~(\ref{seq:berry_curvature}),
and using Theorem \ref{thm3}, we obtain
\begin{align}
\mathcal{F}_{\mu\nu}^{(\Pi)} & =i\Pi[\partial_{\mu}\Pi,\partial_{\nu}\Pi]\Pi=-i\Pi\left[[\mathcal{A}_{\mu},\Pi],[\mathcal{A}_{\nu},\Pi]\right]\Pi\nonumber \\
 & =-i\Pi\left[\mathcal{A}_{\mu}\Pi\mathcal{A}_{\nu}\Pi-\Pi\mathcal{A}_{\mu}\mathcal{A}_{\nu}\Pi-\mathcal{A}_{\mu}\Pi\Pi\mathcal{A}_{\nu}+\Pi\mathcal{A}_{\mu}\Pi\mathcal{A}_{\nu}\right]\Pi-(\mu\leftrightarrow\nu)\nonumber \\
 & =-i(\Pi\mathcal{A}_{\mu}\Pi\mathcal{A}_{\nu}\Pi-\Pi\mathcal{A}_{\mu}\mathcal{A}_{\nu}\Pi)-(\mu\leftrightarrow\nu)\nonumber \\
 & =i\Pi\mathcal{A}_{\mu}(1-\Pi)\mathcal{A}_{\nu}\Pi-i\Pi\mathcal{A}_{\nu}(1-\Pi)\mathcal{A}_{\mu}\Pi,
\end{align}
which is exactly the second term in Eq.~(\ref{eq:project_der}).
Note that $\Pi\mathcal{F}_{\mu\nu}^{(\Pi)}\Pi=\mathcal{F}_{\mu\nu}^{(\Pi)}$.
\end{proof}

In the last theorem, we show that the non-abelian Berry curvature defined in terms of the projectors in Eq.~(\ref{seq:berry_curvature}) coincide with definition of Wilczek and Zee~\cite{Wilczek}.

\begin{theorem}
Let $\{\ket{\alpha}\}$ be a basis of a subspace with projector $\Pi$,
namely $\Pi=\sum_{\alpha}\ket{\alpha}\bra{\alpha}$. Let us denote
the matrix elements $F_{\mu\nu}^{\alpha\beta}\equiv\bra{\alpha}\mathcal{F}_{\mu\nu}^{(\Pi)}\ket{\beta}$.
Then this matrix $F_{\mu\nu}$ can be computed from 
\begin{equation}
F_{\mu\nu}=\partial_{\mu}A_{\nu}-\partial_{\nu}A_{\mu}-i[A_{\mu},A_{\nu}],
\end{equation}
where $A_{\mu}$ is the non-abelian Wilczek-Zee connection, generalizing
abelian Berry connection, defined as $A_{\mu}^{\alpha\beta}=\bra{\alpha}i\partial_{\mu}\ket{\beta}.$ 
\end{theorem}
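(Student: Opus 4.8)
The plan is to reduce the identity to a direct computation of matrix elements in the given basis, using nothing beyond $\Pi_{\boldsymbol{\varphi}}=\sum_\gamma\ket{\gamma}\bra{\gamma}$ and the definitions. First, since $\Pi_{\boldsymbol{\varphi}}\ket{\beta}=\ket{\beta}$ and $\bra{\alpha}\Pi_{\boldsymbol{\varphi}}=\bra{\alpha}$, the outer projectors in $\mathcal{F}_{\mu\nu}^{(\Pi)}=i\Pi_{\boldsymbol{\varphi}}[\partial_\mu\Pi_{\boldsymbol{\varphi}},\partial_\nu\Pi_{\boldsymbol{\varphi}}]\Pi_{\boldsymbol{\varphi}}$ drop out, so $F_{\mu\nu}^{\alpha\beta}=i\bra{\alpha}[\partial_\mu\Pi_{\boldsymbol{\varphi}},\partial_\nu\Pi_{\boldsymbol{\varphi}}]\ket{\beta}$. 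I will also record the two ingredients that come from orthonormality of the smooth basis: $\braket{\gamma|\partial_\mu\delta}=-iA_\mu^{\gamma\delta}$ by definition of the connection, and differentiating $\braket{\gamma|\delta}=\delta_{\gamma\delta}$ gives $\braket{\partial_\mu\gamma|\delta}=-\braket{\gamma|\partial_\mu\delta}=iA_\mu^{\gamma\delta}$ (equivalently, $A_\mu$ is Hermitian).

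Next I would differentiate the projector, $\partial_\mu\Pi_{\boldsymbol{\varphi}}=\sum_\gamma(\ket{\partial_\mu\gamma}\bra{\gamma}+\ket{\gamma}\bra{\partial_\mu\gamma})$, and contract on the left with $\bra{\alpha}$ and on the right with $\ket{\beta}$, obtaining $\bra{\alpha}\partial_\mu\Pi_{\boldsymbol{\varphi}}=\bra{\partial_\mu\alpha}-i\sum_\gamma A_\mu^{\alpha\gamma}\bra{\gamma}$ and $\partial_\nu\Pi_{\boldsymbol{\varphi}}\ket{\beta}=\ket{\partial_\nu\beta}+i\sum_\delta A_\nu^{\delta\beta}\ket{\delta}$. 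Multiplying these out produces four bilinear terms; using $\braket{\partial_\mu\alpha|\delta}=iA_\mu^{\alpha\delta}$ and $\braket{\gamma|\partial_\nu\beta}=-iA_\nu^{\gamma\beta}$, three of them combine to $-(A_\mu A_\nu)^{\alpha\beta}$ and one leaves $\braket{\partial_\mu\alpha|\partial_\nu\beta}$, so that $\bra{\alpha}(\partial_\mu\Pi_{\boldsymbol{\varphi}})(\partial_\nu\Pi_{\boldsymbol{\varphi}})\ket{\beta}=\braket{\partial_\mu\alpha|\partial_\nu\beta}-(A_\mu A_\nu)^{\alpha\beta}$.

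Antisymmetrizing in $(\mu,\nu)$ and multiplying by $i$ then gives $F_{\mu\nu}^{\alpha\beta}=i\braket{\partial_\mu\alpha|\partial_\nu\beta}-i\braket{\partial_\nu\alpha|\partial_\mu\beta}-i[A_\mu,A_\nu]^{\alpha\beta}$. To finish, I recognize the first pair as an abelian-type curl: differentiating $A_\nu^{\alpha\beta}=i\braket{\alpha|\partial_\nu\beta}$ yields $\partial_\mu A_\nu^{\alpha\beta}=i\braket{\partial_\mu\alpha|\partial_\nu\beta}+i\braket{\alpha|\partial_\mu\partial_\nu\beta}$, and the symmetric second-derivative terms cancel in $\partial_\mu A_\nu^{\alpha\beta}-\partial_\nu A_\mu^{\alpha\beta}$ by equality of mixed partials, leaving exactly $i\braket{\partial_\mu\alpha|\partial_\nu\beta}-i\braket{\partial_\nu\alpha|\partial_\mu\beta}$. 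Assembling the pieces gives $F_{\mu\nu}=\partial_\mu A_\nu-\partial_\nu A_\mu-i[A_\mu,A_\nu]$, as claimed.

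The only real obstacle is bookkeeping: keeping the four bilinear terms and their index orders straight when multiplying the two contracted expressions, and applying the Hermiticity relation $\braket{\partial_\mu\alpha|\gamma}=iA_\mu^{\alpha\gamma}$ with the correct index placement. Everything else is a one-line product-rule identity. As a cross-check (and slightly slicker alternative I would mention), both sides transform identically under a $\boldsymbol{\varphi}$-dependent unitary change of basis---$\mathcal{F}_{\mu\nu}^{(\Pi)}$ is basis-independent as an operator, and $\partial_\mu A_\nu-\partial_\nu A_\mu-i[A_\mu,A_\nu]$ is the standard non-abelian field strength---so it suffices to verify the identity at a single point in a gauge with $A_\mu=0$ there, where the computation collapses to $F_{\mu\nu}^{\alpha\beta}=i\braket{\partial_\mu\alpha|\partial_\nu\beta}-i\braket{\partial_\nu\alpha|\partial_\mu\beta}=\partial_\mu A_\nu^{\alpha\beta}-\partial_\nu A_\mu^{\alpha\beta}$.
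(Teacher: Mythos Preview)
Your proof is correct and follows essentially the same route as the paper: reduce to $i\bra{\alpha}[\partial_\mu\Pi,\partial_\nu\Pi]\ket{\beta}$, evaluate $\bra{\alpha}(\partial_\mu\Pi)(\partial_\nu\Pi)\ket{\beta}=\braket{\partial_\mu\alpha|\partial_\nu\beta}-(A_\mu A_\nu)^{\alpha\beta}$, antisymmetrize, and identify the curl of $A$ via the product rule. The only cosmetic difference is that the paper obtains the intermediate expression in one step using $\bra{\alpha}\partial_\mu\Pi=\bra{\partial_\mu\alpha}(1-\Pi)$ (from differentiating $\bra{\alpha}\Pi=\bra{\alpha}$), which sidesteps your four-term expansion; your gauge-covariance cross-check is a nice addition not present in the paper.
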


\begin{proof}
Using the definition of the non-abelian Berry curvature, we have
\begin{align}
F_{\mu\nu}^{\alpha\beta} & =i\bra{\alpha}[\partial_{\mu}\Pi,\partial_{\nu}\Pi]\ket{\beta}\\
 & =i\bra{\partial_{\mu}\alpha}(1-\Pi)\ket{\partial_{\nu}\beta}-(\mu\leftrightarrow\nu)\\
 & =\left[i\braket{\partial_{\mu}\alpha|\partial_{\nu}\beta}-i\sum_{\gamma}\braket{\partial_{\mu}\alpha|\gamma}\braket{\gamma|\partial_{\nu}\beta}\right]-(\mu\leftrightarrow\nu)\\
 & =\partial_{\mu}A_{\nu}^{(\alpha\beta)}-\partial_{\nu}A_{\mu}^{(\alpha\beta)}-i\sum_{\gamma}\left[A_{\mu}^{\alpha\gamma}A_{\nu}^{\gamma\beta}-A_{\nu}^{\alpha\gamma}A_{\mu}^{\gamma\beta}\right]
\end{align}
\end{proof}

\section{Derivation of the energy pumping formula}
The energy pumped into drive $\mu$ can be computed via
\begin{equation}
E_\mu(\boldsymbol{\varphi}_t) = \int_0^t ds\, \dot{\varphi}_s^\mu \bra{\psi(s)}\partial_\mu \mathcal{A}_t(\boldsymbol{\varphi}_s)\ket{\psi(s)}.
\end{equation}
Since $\ket{\psi(s)} = \Pi_{\boldsymbol{\varphi}_s}^n\ket{\psi(s)}$, we have
\begin{equation}
\bra{\psi(s)}\partial_\mu \mathcal{A}_t(\boldsymbol{\varphi}_s)\ket{\psi(s)} = \bra{\psi(s)}\Pi_{\boldsymbol{\varphi}_s}^n\partial_\mu \mathcal{A}_t(\boldsymbol{\varphi}_s)\Pi_{\boldsymbol{\varphi}_s}^n\ket{\psi(s)}.
\end{equation}
Using Theorems~\ref{thm1} and \ref{thm4}, we have
\begin{equation}
    \Pi_{\boldsymbol{\varphi}_s}^n\partial_\mu \mathcal{A}_t(\boldsymbol{\varphi}_s)\Pi_{\boldsymbol{\varphi}_s}^n = -\mathcal{F}_{\mu t}^n = \mathcal{F}_{t\mu} = \sum_\nu \dot{\varphi}_t^\nu \mathcal{F}_{\nu\mu}^n.
\end{equation}

Noting that $\mathcal{F}_{\mu\mu}^n = 0$, we arrive at the energy pumping formula of the main text
\begin{equation}
    E_\mu(\boldsymbol{\varphi}_t) = \sum_{\nu\neq \mu}\int_0^t ds\, \dot{\varphi}_s^\nu \dot{\varphi}_s^\mu \bra{\psi(s)}\mathcal{F}_{\nu\mu}^n(\boldsymbol{\varphi}_s)\ket{\psi(s)}.
\end{equation}

\section{Derivation of the KGP in the tripod system}
The dark state projector $\Pi = 1 - P$, where $P = \ket{\tilde{g}}\bra{\tilde{g}}$, with
\begin{equation}
    \ket{\tilde{g}} = \sum_j \Omega_j \ket{g_i}/\Omega, \quad \Omega = \sqrt{\sum_i \Omega_i^2} 
\end{equation}

The KGP can be computed as
\begin{align}
    \mathcal{A}_t &= \frac{i}{2}\left\{[\partial_tP,P] + [\partial_t(1-P), (1-P)]\right\} \notag \\
    &= i(\dot{P}P - P\dot{P}) \notag \\
   &=\frac{i}{\Omega^{2}}\sum_{jk}(\dot{\Omega}_{j}\Omega_{k}+\Omega_{j}\dot{\Omega}_{k})\left[\ket{g_{j}}\bra{g_{k}}P-P\ket{g_{j}}\bra{g_{k}}\right] \notag \\ 
   &= \frac{i}{\Omega^{4}}\sum_{jkl}\left(\dot{\Omega}_{j}\Omega_{k}^{2}\Omega_{l}\ket{g_{j}}\bra{g_{l}}-\Omega_{l}\Omega_{j}^{2}\dot{\Omega}_{k}\ket{g_{l}}\bra{g_{k}}\right) \notag \\
   &= \frac{i}{\Omega^{2}}\sum_{jk}(\dot{\Omega}_{j}\Omega_{k}-\dot{\Omega}_{k}\Omega_{j})\ket{g_{j}}\bra{g_{k}}.
\end{align}

\section{Gauge invariance of the energy pumping formula}
In this section, we verify that the pumped energy into drive $\mu$ when the system is evolved along trajectory $\boldsymbol{\varphi}_t$
\begin{equation}
    E_\mu (\boldsymbol{\varphi}_t) = -2\mathrm{Im}(c^*_1 c_2)\sum_{\nu\neq\mu}\int_0^t ds\, \dot{\varphi}^\nu_s\dot{\varphi}^\mu_s \mathrm{Eu}_{\nu\mu}(\boldsymbol{\varphi}_s).
    \label{eq:single_trajectory}
\end{equation}
does not depend on the choice of basis $\{\ket{u_1},\ket{u_2}\}$.

Let us denote the system state as 
\begin{equation}
\ket{\psi(t)} = c_1\ket{u_1(\boldsymbol{\varphi}_t)} + c_2 \ket{u_2(\boldsymbol{\varphi}_t)}
=c'_1\ket{u'_1(\boldsymbol{\varphi}_t)} + c'_2 \ket{u'_2(\boldsymbol{\varphi}_t)},
\end{equation}
where $\{\ket{u'_1},\ket{u'_2}\}$ is a different basis related to $\{\ket{u_1},\ket{u_2}\}$ by a local $O(2)$ gauge transformation
\begin{equation}
    \begin{pmatrix}
\boldsymbol u_1'\\
\boldsymbol u_2'
    \end{pmatrix} = R
    \begin{pmatrix}
\boldsymbol u_1\\
\boldsymbol u_2
    \end{pmatrix}, \quad
    R \in O(2), \quad\boldsymbol{u}_i = \ket{u_i}.
\end{equation}
Explicitly, we can parametrize the $O(2)$ matrix 
\begin{equation}
R = 
\begin{pmatrix}
    \cos\theta & \sin\theta \\
    -\sin\theta & \cos\theta
\end{pmatrix}
\end{equation}
with $\det R = +1$ (namely SO(2)), or
\begin{equation}
 R = 
\begin{pmatrix}
    \cos\theta & \sin\theta \\
    \sin\theta & -\cos\theta
\end{pmatrix}   
\end{equation}
with $\det R = -1$. 

Note that the Euler form can also be written as
\begin{equation}
    \mathrm{Eu}_{\mu\nu}
=
\tilde{\boldsymbol g}\cdot
\Big(\partial_\mu \tilde{\boldsymbol g}\times \partial_\nu \tilde{\boldsymbol g}\Big).
\label{eq:EulerForm_gtilde}
\end{equation}
where $\tilde{\boldsymbol g}=\boldsymbol u_1 \times \boldsymbol u_2$, and $\boldsymbol{u}_i \equiv \ket{u_i}$.

For an orientation-preserving transformation ($\det R = 1$)
\begin{equation}
\begin{pmatrix}
\boldsymbol u_1'\\
\boldsymbol u_2'
\end{pmatrix}
=
\begin{pmatrix}
\cos\theta & \sin\theta\\
-\sin\theta & \cos\theta
\end{pmatrix}
\begin{pmatrix}
\boldsymbol u_1\\
\boldsymbol u_2
\end{pmatrix},
\end{equation}
we have
\begin{align}
\tilde{\boldsymbol g}'
&=\boldsymbol u_1'\times \boldsymbol u_2'
\nonumber\\
&=(\cos\theta\,\boldsymbol u_1+\sin\theta\,\boldsymbol u_2)\times
(-\sin\theta\,\boldsymbol u_1+\cos\theta\,\boldsymbol u_2)
\nonumber\\
&=(\cos^2\theta+\sin^2\theta)\,(\boldsymbol u_1\times \boldsymbol u_2)
=\tilde{\boldsymbol g}.
\label{eq:gtilde_invariant}
\end{align}
Therefore $\tilde{\boldsymbol g}$ is invariant under $SO(2)$ gauge transformations, and so is the Euler form $\mathrm{Eu}_{\mu\nu}$ defined in Eq.~\eqref{eq:EulerForm_gtilde}.

If instead the basis transformation reverses orientation (i.e. $\det R=-1$, such as exchanging $\boldsymbol u_1\leftrightarrow \boldsymbol u_2$), then
\begin{equation}
\tilde{\boldsymbol g}'=\boldsymbol u_1'\times \boldsymbol u_2' = -\,(\boldsymbol u_1\times \boldsymbol u_2)=-\tilde{\boldsymbol g},
\end{equation}
which implies
\begin{equation}
\mathrm{Eu}'_{\mu\nu} = -\,\mathrm{Eu}_{\mu\nu}.
\end{equation}
Hence, the Euler form is gauge invariant under orientation-preserving changes of basis and flips sign under orientation reversal.

Denoting
\begin{equation}
    \boldsymbol{c}=\begin{pmatrix}
        c_1 \\ c_2
    \end{pmatrix}, \quad 
    \boldsymbol{\tilde{c}}=\begin{pmatrix}
        \tilde{c}_1 \\ \tilde{c}_2
    \end{pmatrix},
\end{equation}
we have $\boldsymbol{\tilde{c}} = R \boldsymbol{c}$.

Note that the term 
\begin{equation}
    -2\mathrm{Im}(c^*_1 c_2) = 
   \begin{pmatrix}
       c_1^* & c_2^*
   \end{pmatrix} 
   \begin{pmatrix}
       0 & -1 \\
       1 & 0
   \end{pmatrix}
   \begin{pmatrix}
       c_1 \\ c_2
   \end{pmatrix}
   =\boldsymbol{c}^\dagger (-i\sigma_y)\boldsymbol{c},
\end{equation}
where $\sigma_y$ is the Pauli matrix. Thus, we have that $-2\mathrm{Im}(c^*_1 c_2)$ is invariant if $\det R = 1$ and it flips the sign if $\det R = -1$.

Hence, the term $-2\mathrm{Im}(c_1^*c_2)\mathrm{Eu}_{\nu\mu}$ is invariant under transformation $R$, which verifies the $O(2)$ gauge invariance of the energy pumping formula.

\end{widetext}
\end{document}